\newtheorem{theorem}{Theorem}
\newenvironment{remark}[1][Remark]{\begin{trivlist}
\item[\hskip \labelsep {\bfseries #1}]}{\end{trivlist}}
\date{}
\author[1]{Grzegorz Rz\c{a}dkowski}
\author[2,3]{Wojciech Rz\c{a}dkowski}
\author[4]{Pawe\l~W\'{o}jcicki}
\affil[1]{Department of Finance and Risk Management, Warsaw University of 
Technology, Ludwika Narbutta 85\\ 
 00-999 Warsaw, Poland\\
email: g.rzadkowski@wz.pw.edu.pl}
\affil[2]{Faculty of Physics, University of Warsaw, Ho\.za 69\\
 00-681 Warsaw, Poland\\}
\affil[3]{ Faculty of Electronics and Information Technology, Warsaw University 
of  Technology, Nowowiejska 15/19\\
 00-665 Warsaw, Poland}
\affil[4]{Faculty of Mathematics and Computer Science, Warsaw University of 
Technology, Koszykowa 75 \\
 00-662 Warsaw, Poland\\
 }
\title{On some connections between the Gompertz function and~special numbers}
\begin{document}

\maketitle
\thispagestyle{empty}


\begin{abstract}
In the present paper we show that the Gompertz function, the Fisher--Tippett and 
the Gumbel probability distributions are related to both Stirling numbers of the 
second kind and Bernoulli numbers. Especially we prove for the Gumbel 
probability density function an analog of the Grosset--Veselov formula which 
connects 1-soliton solution of the KdV equation with Bernoulli numbers.  
\end{abstract}


\section{Introduction}
The differential equation defining the Gompertz function is as follows
\begin{equation}\label{2a}
	u'(t)=qu\log\frac{u_{max}}{u},\quad u(0)=u_{0}>0,
\end{equation}
where $t$ is time, $u=u(t)$ is an unknown function, $q, u_{max}$ are constants. 
 
Constant $u_{max}$ is called the saturation level. The integral curve $u(t)$ 
fulfilling condition $0<u(t)<u_{max}$ is known as the Gompertz function. \\ 
Equation (\ref{2a}) is the first order ordinary differential equation which is 
easily solved by the method of the separation of variables.
After solving it we get the following formula for the Gompertz function 
\begin{equation}\label{2b}
	u(t)=u_{max}e^{-ce^{-qt}},
\end{equation}
where constant $c$ appears in the integration process and is connected with the 
initial condition $\displaystyle u(0)=u_{0}=u_{max}e^{-c}$, therefore 
$\displaystyle c=\log\frac{u_{max}}{u_{0}}$. 

If $u_{max}=1 $ then (\ref{2b}) is the cumulative distribution function for the 
Fisher--Tippett distribution. A particular case of the Fisher--Tippett 
distribution (for $c=q=1$) is the Gumbel distribution  with the probability 
density function (pdf): $\displaystyle g(t)=e^{-e^{-t}}e^{-t}$.

The Gumbel distribution has some application in physics. For example, Itoh, 
Mahmoud and Takahashi~\cite{Itoh} consider a 
discrete model of wave propagation. They consider the wave propagation as 
motion of an ensemble of particles, each of them jumping forward with certain 
probability. The distribution of the quantity $\frac{L-k\ln k}{k}$, where $L$ 
is the wavelength and $k$ -- number of particles is proven to approach Gumbel 
distribution for very large number of particles. This result corresponds to 
propagation of uniform soliton.

This idea is further explored by Itoh and Mahmoud~\cite{Itoh2} as they study 
the Moran population model. An ensemble of $k$ gametes is considered; each of 
gametes can die and produce a new gamete of age 0. The probability of such 
evolution depends on gamete's age. Authors conclude that for large $k$ Gumbel 
distribution approximates $\frac{M-k\ln k}{k}$, where $M$ is the maximal age of 
the population.

Gumbel distribution is also of importance in geophysics. It was used to study 
extreme river floods~\cite{flood1,flood2,flood3}, seismicity in 
Eurasia~\cite{seismicity}, occurence of extreme meteorogical 
events~\cite{extreme} or the level of Aegan and Ionian seas~\cite{Greece}.

The Gompertz curve is also directly used for modelling different phenomena in 
biophysics, 
biology, medicine and others (see Stauffer et al~\cite{S}, Waliszewski and 
Konarski~\cite{WK}).

The paper is organized as follows.  In Sec. 2 we recall some properties of 
Stirling numbers of the second kind and Bernoulli numbers which we use later. In 
Sec. 3 we show our main results concerning the Gompertz function and Gumbel's 
distribution. The paper is concluded in Sec. 4.

\section{Properties of special numbers}
By $\displaystyle { n \brace k} $ we denote the Stirling number of the second 
kind (number of ways of partitioning a set of $n$ 
elements into $k$ nonempty subsets; see Graham et al.
\cite{GKP}). It is set  $\displaystyle{ n \brace 0}=0 $ for $n>0$, 
$\displaystyle { 0 \brace 0}=1 $,  $\displaystyle { n \brace k}=0 $ for $k>n$, 
or 
$k<0$.
Let us recall that the numbers fulfill 
{\setlength\arraycolsep{2pt}
\begin{eqnarray}
&&{ n \brace k}
=\frac{1}{k!}\sum_{j=0}^{k}(-1)^{k-j}{k \choose j}j^n=
\frac{1}{k!}\sum_{j=0}^{k}(-1)^{j}{k \choose j}(k-j)^{n},\label{1a}\\
&&	{ n+1 \brace k}=k{ n \brace k}+ { n \brace k-1},\label{1b}
\end{eqnarray}}
and appear in the Taylor expansion
\[\frac{(e^{w}-1)^{k}}{k!}=	\sum_{n=k}^{\infty }{ n \brace k}
\frac{w^{n}}{n!}.
\]
By $B_{n}$ ($n=0,1,2,\ldots $) we denote the $n$th Bernoulli number. The 
Bernoulli numbers have the following exponential  generating function (see 
Duren \cite{D})
\begin{equation}\label{1c}
B_0+B_{1}z + B_{2}\frac{z^{2}}{2!}+\cdots =\frac{z}{e^{z}-1},
\qquad |z|<2\pi.
\end{equation}
It is well known that $B_{n}$ vanishes for odd $n\ge 3$. The numbers are 
rational and they appear in relations such that
\[\sum_{k=1}^{m-1}k^{n}=\frac{1}{n+1}\sum_{j=0}^{n}{n+1 \choose j} 
B_{j}m^{n+1-j},\quad m,n\ge 1,
\]
or	
\[\sum_{k=1}^{\infty}\frac{1}{k^{2n}}=(-1)^{n+1}
\frac{2^{2n-1}\pi^{2n}}{(2n)!}B_{2n}, \quad n=1,2,\ldots
\]
The first few nonzero Bernoulli numbers are as follows
\[B_{0}=1,\;B_{1}=-\frac{1}{2},\;B_{2}=\frac{1}{6},\;B_{4}=-\frac{1}{30}
,\;B_{6}=\frac{1}{42},\;B_{8}=-\frac{1}{30},\;B_{10}=\frac{5}{66},\;B_{12}
=-\frac{691}{2730}.
\]
Since 
\[\frac{1}{e^{z}+1}=\frac{e^{z}+1-2}{e^{2z}-1}=\frac{1}{e^{z}-1}-\frac{2}{e^{2z}
-1}=\frac{1}{z}\cdot\frac{z}{e^{z}-1}-\frac{1}{z}\cdot\frac{2z}{e^{2z}-1}
\]
we infer from (\ref{1c}) that
{\setlength\arraycolsep{2pt}}
\begin{eqnarray}
\label{1d}
\frac{1}{e^{z}+1}&=&\frac{1}{z}\sum\limits_{n=0}^{\infty}B_{n}\frac{z^{n}}{n!}-
\frac{1}{z}\sum\limits_{n=0}^{\infty}B_{n}\frac{2^{n}z^{n}}{n!}
=\sum\limits_{n=1}^{\infty}\left(B_{n}\frac{z^{n-1}}{n!}-B_{n}\frac{2^{n}z^{n-1}
}{n!}\right)\nonumber\\
&=&\sum\limits_{n=0}^{\infty}\frac{B_{n+1}(1-2^{n+1})}{(n+1)!}\;z^{n},  
\end{eqnarray}
for $|z|<\pi$. \\

\section{Main results}

\begin{theorem}
If $u(t)$ is a solution of equation (\ref{2a}) then its $n$th derivative is 
given by the formula
\begin{equation}\label{2d}
	u^{(n)}(t)=q^{n}\sum\limits_{k=1}^{n}(-1)^{n-k}{ n \brace 
k}u\log^{k}\frac{u_{max}}{u}
\end{equation}
\end{theorem}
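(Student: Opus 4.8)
The plan is to proceed by induction on $n$, using the chain rule together with the defining equation (\ref{2a}) and the Stirling recurrence (\ref{1b}). Throughout I abbreviate $L=\log\frac{u_{max}}{u}$. Two facts will do all the work. First, $u'=quL$, which is just (\ref{2a}) rewritten. Second, $L$ itself satisfies a simple equation: since $L=\log u_{max}-\log u$, we have $L'=-u'/u=-qL$. Consequently differentiation acts very cleanly on the building blocks $uL^{k}$, namely
\[
\frac{d}{dt}\bigl(uL^{k}\bigr)=u'L^{k}+kuL^{k-1}L'=quL^{k+1}-qk\,uL^{k},
\]
so each term splits into a ``raising'' part $quL^{k+1}$ and a ``lowering'' part $-qk\,uL^{k}$.

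For the base case $n=1$, the right-hand side of (\ref{2d}) reduces to $q\,(-1)^{0}{1\brace 1}uL=quL$, which is exactly (\ref{2a}). For the inductive step I assume (\ref{2d}) for some $n\ge 1$ and differentiate, applying the per-term identity above to every summand of $u^{(n)}$. This gives
\[
u^{(n+1)}=q^{n+1}\sum_{k=1}^{n}(-1)^{n-k}{n\brace k}uL^{k+1}-q^{n+1}\sum_{k=1}^{n}(-1)^{n-k}k{n\brace k}uL^{k}.
\]
I then shift the index $k\mapsto k-1$ in the first sum, so that both sums are expressed in terms of $(-1)^{n+1-k}uL^{k}$, and read off the total coefficient of each power $uL^{k}$.

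The heart of the argument, and the step I expect to be the only real obstacle, is the bookkeeping that merges the two sums into a single sum running from $k=1$ to $k=n+1$. For the interior indices $2\le k\le n$ the coefficient of $(-1)^{n+1-k}uL^{k}$ is $k{n\brace k}+{n\brace k-1}$, which is precisely ${n+1\brace k}$ by the recurrence (\ref{1b}). It then remains to verify the two boundary indices separately: at $k=1$ only the lowering part contributes, giving ${n\brace 1}={n+1\brace 1}$ thanks to ${n\brace 0}=0$, and at $k=n+1$ only the raising part contributes, giving ${n\brace n}=1={n+1\brace n+1}$. With the standing conventions ${n\brace 0}=0$ and ${n\brace n+1}=0$, both endpoints are in fact instances of the same recurrence (\ref{1b}), so the combined coefficient equals ${n+1\brace k}$ uniformly for $1\le k\le n+1$. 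This is exactly (\ref{2d}) with $n$ replaced by $n+1$, which closes the induction.
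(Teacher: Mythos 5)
Your proposal is correct and takes essentially the same approach as the paper: induction on $n$, term-by-term differentiation via the chain rule and equation (\ref{2a}), an index shift, and the recurrence (\ref{1b}) to identify the resulting coefficients, with the endpoint terms $k=1$ and $k=n+1$ checked separately. The only cosmetic difference is that you isolate the identity $\frac{d}{dt}\log\frac{u_{max}}{u}=-q\log\frac{u_{max}}{u}$ up front, whereas the paper carries out the same computation inline.
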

\begin{proof}
We will proceed by induction. For $n=1$ formula (\ref{2d}) transforms into 
(\ref{2a}), thus is true. Let us suppose that for a positive integer $n$ 
formula 
(\ref{2d}) holds. Then using the chain rule and identity (\ref{1b}) we get
\begin{eqnarray*}
\lefteqn{u^{(n+1)}(t)=q^{n}\frac{d}{dt}\sum\limits_{k=1}^{n}(-1)^{n-k}{ n 
\brace 
k}u\log^{k}\frac{u_{max}}{u}}
\\ && =q^{n}\sum\limits_{k=1}^{n}(-1)^{n-k}{ n \brace 
k}\left(\log^{k}\frac{u_{max}}{u}-k\log^{k-1}\frac{u_{max}}{u}\right)
qu\log\frac{u_{max}}{u}
\\ && =q^{n+1}\sum\limits_{k=1}^{n}(-1)^{n+1-k}{ n \brace 
k}\left(ku\log^{k}\frac{u_{max}}{u}-u\log^{k+1}\frac{u_{max}}{u}\right)\\
&&=q^{n+1}\bigg[(-1)^{n}u\log\frac{u_{max}}{u}+\sum\limits_{k=2}^{n}(-1)^{n+1-k}
\left(k{ n \brace k}+{ n \brace k-1}  \right)
u\log^{k}\frac{u_{max}}{u}\\
&&\hspace{3mm}+u\log^{n+1}\frac{u_{max}}{u}\bigg]=q^{n+1}\sum\limits_{k=1}^{n+1}
(-1)^{n+1-k}{ n+1 \brace k}u\log^{k}\frac{u_{max}}{u}
\end{eqnarray*}
\end{proof}
Let us denote by $P_{n}(u)$ the right hand side of (\ref{2d}) 
	\[P_{n}(u)=q^{n}\sum\limits_{k=1}^{n}(-1)^{n-k}{ n \brace 
k}u\log^{k}\frac{u_{max}}{u}, \qquad n=1, 2, 3,\ldots
\]
and $P_{0}(u)=u$. 
\begin{remark}
The Bell polynomials are defined as $\displaystyle 
B_{n}(x)=\sum\limits_{k=1}^{n}{ n \brace k}x^{k}$ (see Comtet \cite{C}). Thus 
function $P_{n}(u)$ can be expressed in terms of the Bell polynomials as follows
	\[P_{n}(u)=q^{n}\sum\limits_{k=1}^{n}(-1)^{n-k}{ n \brace 
k}u\log^{k}\frac{u_{max}}{u} =(-q)^{n}u\sum\limits_{k=1}^{n}{ n \brace 
k}\bigg(-\log\frac{u_{max}}{u}\bigg)^{k}=(-q)^{n}u B_{n}\bigg( 
\log\frac{u}{u_{max}}  \bigg)
\]
\end{remark}

Now we introduce the exponential generating function (e.g.f.) for 
$\{P_{n}(u)\}$ 
\begin{equation}\label{2e}
	G(u,z)=P_{0}(u) 
+P_{1}(u)z+P_{2}(u)\frac{z^{2}}{2!}+P_{3}(u)\frac{z^{3}}{3!}+\cdots
\end{equation}
In order to find a closed form formula for e.g.f. (\ref{2e}) observe that 
$G(u(t),z)$ is the Taylor series expansion of the function $u(t+z)$ at point 
$t$. Therefore using formula (\ref{2b}) we get
\begin{equation}
\label{2f}
G(u(t),z)=u(t+z)=u_{max}e^{-ce^{-q(t+z)}}=u_{max}\big(e^{-ce^{-qt}}\big)^{e^{-qz
}}.
\end{equation}
From (\ref{2f}) it follows that
\begin{equation}\label{2ff}
	G(u,z)=u_{max}\left(\frac{u}{u_{max}}\right)^{e^{-qz}}
\end{equation}
\begin{theorem}\label{th2}
For the Gumbel's pdf $\;\;\displaystyle g(t)=e^{-e^{-t}}e^{-t}$ the following 
formula holds
\begin{equation}
\label{2n}
\int_{-\infty}^{\infty}\bigg(\frac{d^{k-1}}{dt^{k-1}}(e^{-e^{-t}}e^{-t})\bigg)^{
2}dt=(-1)^{k}\frac{B_{2k}(1-2^{2k})}{2k},
\end{equation}
where $k=1,2,\ldots$.
\end{theorem}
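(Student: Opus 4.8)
The plan is to recognize the integrand as the square of a derivative of the Gompertz function and then collapse the double integral over $t$ onto the known Bernoulli expansion (\ref{1d}). With $u_{max}=c=q=1$ the Gumbel density is exactly $g(t)=u'(t)$, where $u(t)=e^{-e^{-t}}$ solves (\ref{2a}); hence $\frac{d^{k-1}}{dt^{k-1}}g=u^{(k)}=P_k(u)$ by (\ref{2d}), and the quantity to compute is $I_k=\int_{-\infty}^{\infty}\big(u^{(k)}(t)\big)^{2}\,dt$. First I would record the two facts that drive everything: $u$ increases monotonically from $0$ to $1$ as $t$ runs over $\mathbb{R}$, and, since $P_j(u)=(-1)^{j}u\,B_j(\log u)$ has no constant term in $\log u$ (here $B_j$ is the Bell polynomial of the Remark), every derivative $u^{(j)}$ with $j\ge1$ decays at both $\pm\infty$. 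Concretely $\log u=-e^{-t}$, so $u^{(j)}=(-1)^{j}e^{-e^{-t}}B_j(-e^{-t})$, which tends to $0$ as $t\to-\infty$ (double-exponential decay beats polynomial growth in $e^{-t}$) and as $t\to+\infty$ (since $B_j(0)=0$ for $j\ge1$).

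Next I would strip the squares by repeated integration by parts. Because all boundary terms $\big[u^{(a)}u^{(b)}\big]_{-\infty}^{\infty}$ vanish (one factor always has order $\ge1$ and hence decays, the other stays bounded), each integration by parts gives $\int u^{(a)}u^{(b)}\,dt=-\int u^{(a+1)}u^{(b-1)}\,dt$. Iterating this $k-1$ times moves all but one derivative onto a single factor,
\[ I_k=(-1)^{k-1}\int_{-\infty}^{\infty}u^{(2k-1)}(t)\,u'(t)\,dt=(-1)^{k-1}\int_{0}^{1}P_{2k-1}(u)\,du, \]
where in the last step I substitute $u'(t)\,dt=du$ and use that $u$ sweeps $[0,1]$ monotonically. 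This is the key simplification: the integral over $t$ becomes a one-dimensional moment of $P_{2k-1}$ against the uniform measure on $[0,1]$.

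It then remains to evaluate $\int_{0}^{1}P_m(u)\,du$, for which I would use the closed form of the exponential generating function (\ref{2ff}), which for $u_{max}=q=1$ reads $G(u,z)=u^{e^{-z}}=\sum_{m\ge0}P_m(u)\,z^{m}/m!$. Integrating term by term over $[0,1]$ gives $\sum_{m\ge0}\big(\int_{0}^{1}P_m(u)\,du\big)z^{m}/m!=\int_{0}^{1}u^{e^{-z}}\,du=\frac{1}{1+e^{-z}}$. Writing $\frac{1}{1+e^{-z}}=1-\frac{1}{e^{z}+1}$, inserting the Bernoulli expansion (\ref{1d}), and comparing coefficients of $z^{m}$ for $m\ge1$ yields $\int_{0}^{1}P_m(u)\,du=-\frac{B_{m+1}(1-2^{m+1})}{m+1}$. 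Taking $m=2k-1$ and combining with the sign $(-1)^{k-1}$ from the integration by parts gives exactly $I_k=(-1)^{k}\frac{B_{2k}(1-2^{2k})}{2k}$, which is (\ref{2n}).

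The generating-function bookkeeping and the reindexing of (\ref{1d}) are routine; the genuine obstacle is analytic rather than algebraic. I would need to justify (i) that every boundary term truly vanishes, i.e. that the double-exponential decay of $u$ at $-\infty$ dominates the polynomial-in-$e^{-t}$ growth of $B_j(-e^{-t})$, together with the symmetric decay at $+\infty$; and (ii) the interchange of summation and integration in $\int_{0}^{1}G(u,z)\,du$, which follows from uniform convergence of the $z$-series for $u\in[0,1]$ at fixed real $z$. Once these are secured the identity drops out, and as a sanity check $k=1$ gives $\int_{0}^{1}(-u\log u)\,du=\tfrac14$, matching $(-1)^{1}B_2(1-2^{2})/2=\tfrac14$.
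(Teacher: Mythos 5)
Your proposal is correct and follows essentially the same route as the paper: both rest on Theorem 1, the closed form (\ref{2ff}) of the generating function, the Bernoulli expansion (\ref{1d}) to evaluate $\int_0^{u_{max}} P_m(u)\,du$, the substitution $u=u(t)$, and $k-1$ integrations by parts; you merely run the steps in reverse order (reducing the target integral to the moment of $P_{2k-1}$, rather than computing the moments first and then transforming). Your explicit justification that the boundary terms vanish is a welcome addition that the paper leaves implicit.
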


\begin{proof}
Our aim is to compute integral $p_{n}=\int_{0}^{u_{max}}P_{n}(u)du$, 
$n=0,1,2,\ldots $. By (\ref{2ff}) the e.g.f. for $\{p_{n}\}$ is
\begin{eqnarray}
\lefteqn{\hspace{-25mm}p_{0} 
+p_{1}z+p_{2}\frac{z^{2}}{2!}+p_{3}\frac{z^{3}}{3!}+\cdots 
=\int_{0}^{u_{max}}G(u,z)du=
u_{max}\int_{0}^{u_{max}}\left(\frac{u}{u_{max}}\right)^{e^{-qz}}du \nonumber 
}\\
&&=u_{max}^{2}\cdot 
\frac{1}{e^{-qz}+1}\left(\frac{u}{u_{max}}\right)^{e^{-qz}+1}\bigg|_{0}^{u_{max}
}=
\frac{u_{max}^{2}}{e^{-qz}+1}.\label{2g}
\end{eqnarray}
Using formula (\ref{1d}) we obtain
\begin{equation}\label{2h}
\hspace{-0.5mm}	
\frac{1}{e^{-qz}+1}=\sum\limits_{n=0}^{\infty}\frac{B_{n+1}(1-2^{n+1})}{(n+1)!}
(-qz)^{n}=
	\sum\limits_{n=0}^{\infty}\frac{(-q)^{n}B_{n+1}(1-2^{n+1})}{n+1}\cdot 
\frac{z^{n}}{n!}
\end{equation}
and then comparing coefficients of $z^{n}/n! $ in  (\ref{2g}) and (\ref{2h}) we 
get
\begin{equation}\label{2j}
	p_{n}=\int_{0}^{u_{max}}P_{n}(u)du 
=\frac{(-q)^{n}B_{n+1}(1-2^{n+1})}{n+1}\cdot u_{max}^{2}
\end{equation}
Substituting in the integral(\ref{2j}) variable $t$, 
$u(t)=u_{max}e^{-ce^{-qt}}$ 
we get
\begin{equation}
\label{2k}
\int_{0}^{u_{max}}P_{n}(u)du=\int_{-\infty}^{\infty}u^{(n)}(t)u'(t)dt=\frac{
(-q)^{
n}B_{n+1}(1-2^{n+1})}{n+1}\cdot u_{max}^{2}
\end{equation}
If $n$ is an odd number $n=2k-1$ then integrating in (\ref{2k}) $k-1$ times by 
parts we have
	
\[(-1)^{k-1}\int_{-\infty}^{\infty}(u^{(k)}(t))^{2}dt=\frac{(-q)^{2k-1}B_{2k}
(1-2^{2k})}{2k}\cdot u_{max}^{2},
\]
or
\begin{equation}
\label{2m}
\int_{-\infty}^{\infty}(u^{(k)}(t))^{2}dt=(-1)^{k}\frac{q^{2k-1}B_{2k}(1-2^{2k})
}{2k}\cdot u_{max}^{2}
\end{equation}
Putting in (\ref{2m}) $u_{max}=1$, $q=1$, $c=1$ (in this case 
$u(t)=e^{-e^{-t}}, 
\; u'(t)=e^{-e^{-t}}e^{-t}$) we obtain
formula (\ref{2n}).
\end{proof}
\begin{remark}
For the Gumbel's pdf the formula (\ref{2n}) can be seen as an analog of the 
Grosset-Veselov formula 
\begin{equation}\label{int}
	B_{2k}=\frac{(-1)^{k-1}}{2^{2k+1}}\int_{-\infty}^{+\infty} 
\left(\frac{d^{k-1}}{dx^{k-1}}
	\frac{1}{\cosh^{2}x}\right)^{2}dx,
\end{equation}
which has been demonstrated in \cite{GV}. For another proofs of (\ref{int}) see 
\cite{Bo} and \cite{R}. Formula (\ref{int}) shows the connection between 
1-soliton solution of the KdV equation and the Bernoulli numbers.
\end{remark}
We may also prove some explicit formulae, expressing Bernoulli numbers in terms 
of Stirling numbers of the second kind or binomial numbers. In order to do this 
let us first observe that integrating by parts we get the following recurrence 
formula

\[\int_{0}^{u_{max}}u\log^{n}\frac{u_{max}}{u}\:du=\int_{0}^{u_{max}}\left(\frac
{u^{2}
}{2}\right)'\log^{n}\frac{u_{max}}{u}\:du=
\frac{n}{2}\int_{0}^{u_{max}}u\log^{n-1}\frac{u_{max}}{u}\:du
\]
which by $\int_{0}^{u_{max}}u\: du=u_{max}^{2}/2$ leads to 
\begin{equation}\label{2p}
\int_{0}^{u_{max}}u\log^{n}\frac{u_{max}}{u}\:du=\frac{n!}{2^{n+1}}\cdot 
u_{max}^{2}.
\end{equation}
In view of the definition of polynomial $P_{n}(u)$ and using (\ref{2p}) we can 
rewrite formula (\ref{2j}) as
\[  u_{max}^{2} q^{n}\sum\limits_{k=1}^{n}(-1)^{n-k}{ n \brace 
k}\frac{k!}{2^{k+1}}  =\frac{(-q)^{n}B_{n+1}(1-2^{n+1})}{n+1}\cdot u_{max}^{2},
\]
i.e.,
\begin{equation}\label{2r}
\sum\limits_{k=1}^{n}(-1)^{k}{ n \brace 
k}\frac{k!}{2^{k+1}}=\frac{B_{n+1}(1-2^{n+1})}{n+1}.
\end{equation}
Expressing in (\ref{2r}) the Stirling numbers by binomial numbers (\ref{1a}) we 
get 
\begin{equation}
\sum\limits_{k=1}^{n}\frac{1}{2^{k+1}}\sum_{j=0}^{k}(-1)^{k-j}{k 
\choose 
j}j^n=\frac{B_{n+1}(1-2^{n+1})}{n+1}.
\end{equation}

\section{Conclusions and further work}
We investigated some connections between the Gompertz function and special 
numbers (the Stirling numbers of the second kind and the Bernoulli numbers). We 
showed formula (\ref{2d}), involving Stirling numbers of the second kind, which 
expresses the $n$th derivative of the Gompertz function by this function itself. 
In a particular case ($u_{max}=1$) the Gompertz function is the Fisher-Tippett 
cumulative distribution function. A particular case of the Fisher-Tippett 
distribution (for $c=q=1$) is the Gumbel 
(standard) distribution. \\
Then we proved integral formula (\ref{2n}, Theorem~\ref{th2}), which shows a 
connection between the Gumbel pdf and Bernoulli numbers. In fact in the proof of 
Theorem~\ref{th2} such connection is proved slightly more generally, for the 
Fisher-Tippett pdf and the Gompertz function. Formula (\ref{2n}) can be seen as 
an analog of the Grosset-Veselov integral formula which connects 1-soliton 
solution of the KdV equation with Bernoulli numbers.\\
We get also in Sec. 3 some formulae expressing Bernoulli numbers in terms of 
Stirling numbers of the second kind or binomial numbers.\\
The approach used here could be applied for other functions or even for systems 
of nonlinear differential equations like the Lotka--Volterra model.

\section*{}

\end{document}